\newtheorem{definition}{Definition}
\newtheorem{theorem}[definition]{Theorem}
\newtheorem{proposition}[definition]{Proposition}
\newcommand{\wsl}[1]{{\small\color{blue}{\bf [*** WSL: #1]}}}
\newcommand{\cmh}[1]{{\small\color{red}{\bf [*** CMH: #1]}}}
\begin{document}
%
\title{Tuning the Diversity of Open-Ended Responses From the Crowd}
  
\author[1]{Walter S. Lasecki  \thanks{Supported in part by a Microsoft Research Graduate Fellowship.}}
\author[2]{Christopher M. Homan \thanks{Supported in part by NSF Award \#SBE-1111016.}}
\author[3]{Jeffrey P. Bigham \thanks{Supported in part by NSF Award \#IIS-1149709.}}
\affil[1]{University of Rochester\\Computer Science\\Rochester, NY 14627, wlasecki@cs.rochester.edu}
\affil[2]{Rochester Institute of Technology\\Computer Science\\Rochester, NY 14623, cmh@cs.rit.edu}
\affil[3]{Carnegie Mellon University\\HCI and LTI Institutes\\Pittsburgh, PA 15213, jbigham@cs.cmu.edu}

\maketitle

\begin{abstract}
Crowdsourcing can solve problems that current fully automated systems cannot. Its effectiveness depends on the reliability, accuracy, and speed of the crowd workers that drive it. These objectives are frequently at odds with one another. For instance, how much time should workers be given to discover and propose new solutions versus deliberate over those currently proposed? 
How do we determine if discovering a new answer is appropriate at all?
And how do we manage workers who lack the expertise or attention needed to provide useful input to a given task?
We present a mechanism that uses distinct payoffs for three possible worker actions---\emph{propose}, \emph{vote}, or \emph{abstain}---to provide workers with the necessary incentives to guarantee an effective (or even optimal) balance between searching for new answers, assessing  those currently available, and, when they have insufficient expertise or insight for the task at hand, abstaining.  We provide a novel game theoretic analysis for this mechanism and test it experimentally on an image-labeling problem and show that it allows a system to reliably control the balance between discovering new answers and converging to existing ones.
\end{abstract}



\section{Introduction}
\begin{figure}
\begin{center}
\hspace{-0.3pc}\includegraphics[width=20pc]{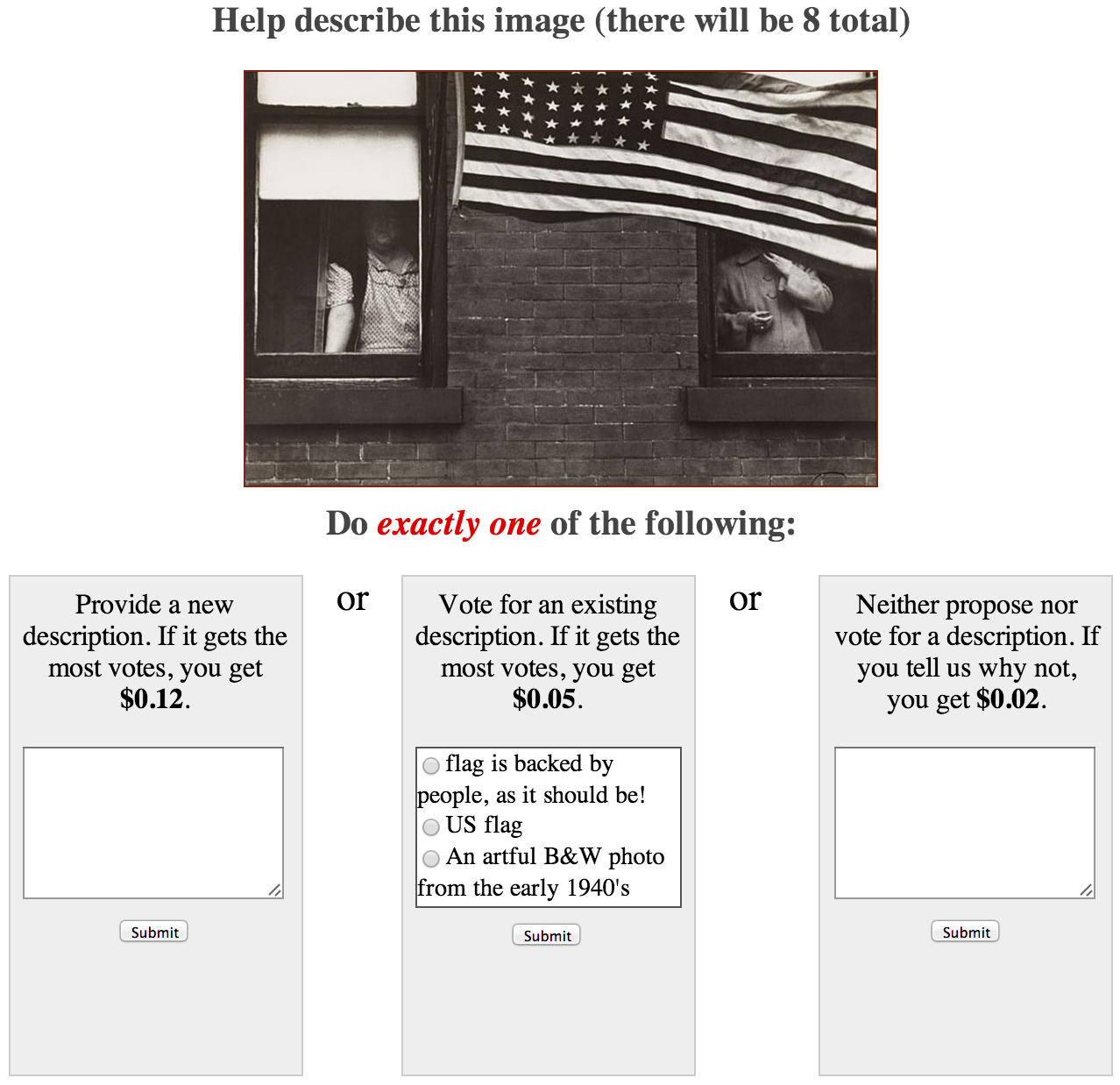}
\vspace{0.3pc}
\caption{An example of the sort of problems that the propose-vote-abstain can solve. Workers are asked to label a series of images. They are shown potential payoffs, and asked to choose between voting, proposing, or abstaining from contributing an answer.}
\vspace{0.8pc}
\label{fig:ui}
\end{center}
\end{figure} 

The rapidly increasing popularity of crowdsourcing is due in large part to its demonstrated capability to perform tasks that lie far beyond the reach of current fully automated systems, and to marshal human resources more efficiently than offline approaches can.
Prior work has shown that crowdsourcing can complete tasks such as editing documents \cite{soylent}, labeling images \cite{vonahndabbish2004}, discovering the folding structure of protein~\cite{foldit}, discovering celestial bodies \cite{lintott2008galaxy}, and evaluating the quality of products and services~\cite{alonsocrowdsearchengine2012}. Recent advances have even led to real time services \cite{legion,lasecki2013chorus,adrenaline,panosunknown2011,panoscontentandcontext2011}, such assistance technology that answers visual questions ~\cite{vizwiz,view} or help find accessible paths \cite{hara-curb} for blind users, and captions audio in any setting for deaf and hard of hearing users \cite{scribe}.

Across this wide range of applications, crowdsourcing relies on the mystery of human intelligence and the highly distributed nature of the crowd to gain its advantages. This leads to two basic design problems: how to motivate the crowd to contribute good information and how to integrate this information into a coherent solution.
Eliciting and integrating multiple workers responses has become a key attribute of many crowd-powered systems -- for instance, prior work has investigated how to coordinate and aggregate the input of multiple workers to determine the best edits to make in a document \cite{soylent},
combine real-time interface control input \cite{legion}, and determine the best response to provide in a conversation \cite{lasecki2013chorus}.

The relationship between data production and filtering is complex. In general, more workers contributing data means more filtering is necessary, which increases the overall cost of the task, in terms of (human or otherwise) computing resources. On the other hand, if too little effort is spent on discovery then the results may be unsatisfying. 
Additionally, if as here we assume that humans perform both tasks then we need to respect the competencies and preferences of these workers and route them to the particular tasks that they are most fit to perform. 

In this paper we explore the use of game-theoretic principals to achieve the right balance between data production and filtering. We present the \emph{propose-vote-decline} mechanism. 
Each crowd worker is given a choice among
\emph{proposing} an answer, \emph{voting} among the answers proposed
so far, or \emph{abstaining}, i.e., doing nothing. When a stopping
condition is reached, the mechanism returns the answer with the most
votes. Workers are paid a base amount, with bonuses if they propose or
vote for the winning answer.


We make the following contributions in this paper: 

\begin{itemize}
\item \textbf{(1)} We present the propose-vote-decline 
mechanism.  It is inspired by the natural interplay among individuals in any number of actual on- or off- line collective problem solving situations: individuals may suggest new ideas (or propose alternatives to existing ones), support exist ideas, or, if they feel they have nothing to contribute at the time, listen quietly. 

\item \textbf{(2)} We provide a game-theoretic analysis that shows, among 
other things, the baseline behavior of the system under different worker 
incentive structures. The analytical framework we use appears to be novel to the crowdsourcing world. Prior research focuses mostly on games of complete information that allow their workers to choose the cost (in terms of effort) they pay on their work. Our analysis uses fixed costs and accounts for worker uncertainty. We argue that these assumptions are better suited to many crowdsourcing settings, especially those involving \emph{microtasking} and real time services, where workers are not expected to put much effort into any one transaction.

\item \textbf{(3)} We study experimentally a fundamental problem 
in this setting, namely, how to best distribute limited resources between 
answer discovery and voting. How many alternatives---absolutely or relative to the number of voters---is ideal may vary dynamically on the answers proposed so far or on the nature of the questions asked. Figure \ref{fig:ui} shows to user interface of our experimental system. We show that the propose-vote-decline mechanism can provide information on what the best balance is for particular questions and
even tune this balance in real time.

\end{itemize}

\section{Background}
\subsection{Crowdsourcing}
Crowdsourcing was coined by Howe \cite{howe2006rise}, though it existed long before then. He characterized it as any process having an open call format and a large network of potential laborers. It encompasses a wide range of activities, from collaborative filtering \cite{resnick1994grouplens} to system design contests \cite{moldovanu2001optimal}, citizen science \cite{foldit}, or any activity supported by Amazon Mechanical Turk (www.mturk.com) or Tasken (www.tasken.com). Workers may be engaged for minutes \cite{scribe} \cmh{Is this cite accurate?} or weeks at a time~\cite{foldit}. In this paper, we focus on \emph{microtask} crowdsourcing, which Amazon Mechanical Turk (AMT) and Tasken both support, in which workers are typically paid relatively small amounts of money (i.e., often less than a dollar) to perform relatively simple tasks with short time durations.

Microtasking has proved useful in writing and
editing \cite{soylent}, image description and interpretation
\cite{vizwiz,espgame}, and many
other areas. Existing mechanism focus on obtaining quality work
and generally introduce redundancy and layering into tasks so that
multiple workers contribute and verify results at each
stage. For instance, guaranteeing reliability through answer agreement
\cite{espgame} or the find-fix-verify pattern of Soylent \cite{soylent}.

Several systems have explored how to make crowdsourcing
interactive and real-time. As an example, VizWiz \cite{vizwiz} answers visual
questions for blind people quickly. It uses quikTurkit to prequeue
crowds of workers from AMT so that they
will be available when needed. For instance, the ESP Game
encouraged accurate image labels by pairing players together and
requiring them both to enter the same label, although ESP Game players
could also be paired with simulated players \cite{espgame}. Seaweed
reliably got AMT workers to be available at the same time
to play economic games by requiring the first worker to arrive to wait
(generally for a few seconds) \cite{seaweed}.

Recent crowd-powered systems target quick responses by pre-recruit workers who are then kept on standby until they are needed \cite{vizwiz,adrenaline}, which reduces or eliminates the time to recruit workers. Latency is thus determined by how quickly workers choose to complete their task. While current models sacrifice accuracy for latency, our approach is designed to encourage optimization of both accuracy and latency, based on a worker's internal confidence.

Another approach to soliciting real-time input is to maintain worker engagement in a task. This allows systems to benefit from repeated observation of the same workers and allows workers to exercise longer-term strategies. For instance, the Legion system connects crowd workers to existing desktop user interfaces via a real-time feedback loop for the duration of an interaction \cite{legion}. Input from multiple individual workers is aggregated into a single control stream by using worker responses to select the `best' answer in real-time. Legion has been used to reliably control a variety of interfaces requiring continuous real-time input, ranging from robot navigation to document editing. Similarly, Legion:Scribe uses multiple workers to perform a captioning tasks better than any constituent individual could have by synthesizing the workers partial captions into a single stream \cite{scribe}.

Systems such as Legion have generally motivate the crowd well, but requires tasks to be of fixed  length because workers are only rewarded only when the task is completed. In this paper, tasks are framed as atomic, with a single collective answer per task aggregated from one input from each worker. However, our approach supports chaining atomic segments together in a continuous-task setting, which enables application to continuous real-time tasks.
Segmenting of continuous tasks using the crowd  has been performed by
Legion:AR, using reliable groups of workers segmented and labeled a
video for activity recognition \cite{legion-ar}. Our approach provides
a means of motivating more general crowds for such tasks.

\subsection{Theory in Crowdsourcing}
Game theory provides an explanatory framework for reasoning about how the behaviors of others affects the outcome of interactive processes.
\emph{Mechanism design} is an area of game theory that provides an explanatory framework for designing \textit{games} or \textit{mechanisms} that elicit specific responses from players (participants in the situation---workers, in our case) thus yielding desirable outcomes. The players are assumed to be \textit{rational}, that is, players will always act to maximize their utility.

In spite of its formal conceits, game theory can help to explain group behavior in real settings, especially if we regard it as providing an idealized baseline against which to compare observed behavior. It has already been widely used to study a number of problems in crowdsourcing.

Much of the game theoretic research on crowdsourcing falls into one of three categories. The first studies crowdsourced contests, in which participants compete in some sort of creative task, such as discovering the folding structure of protein \cite{foldit} (where workers compete for high scores) or design a new collaborative filtering algorithm for Netflix \cite{bell2007lessons} (where workers competed for a cash prize). This research typically views crowdsourcing as an all-pay auction \cite{dipalantino2009crowdsourcing}. Studies have compared the performance (in terms of market efficiency) of winner-take-all versus distributed reward structures \cite{boudreau2011incentives,moldovanu2001optimal,moldovanu2006contest,archak2009optimal,ChawlaHS12}. In most of this work, the amount of time in which a crowd worker is engaged can be very long, the value function for the contributions is public and objective, and which worker(s) are rewarded is determined by a principal actor. The major strategic problem the workers face is thus to determine how much effort they should expend. In this work, we are also interested in determining the best contribution(s).

In contrast, however, we are interested in situations in which evaluation is subjective in nature and there may be a great deal of uncertainty about what the ``best'' contribution is. We assume that the wisdom of the crowd is the best arbiter of quality---an assumption that is valid in many settings. We also assume that workers are engaged for short periods of time and paid a small amount for their contributions. These assumptions hold true in many crowdsourcing applications. Yet under these assumptions it is not realistic for workers to choose strategically the amount of effort they will expend on their contributions, as the range of effort levels available to choose from is quite narrow. Rather, workers can choose strategically whether to propose a new contribution, endorse (by voting for) an existing contribution, or abstain altogether, and our goal is to design mechanisms that provide the best answer available on demand and quickly converge to an optimal answer.

The second major category of game theory applied to crowdsourcing deals with recommender systems. Like the present paper, much this work  \cite{resnick1997recommender,prelec2004bayesian,miller2005eliciting,jurca2005enforcing,jurca2007robust,witkowski2012peer}, assumes that rating is a subjective process and use voting to perform evaluation. Unlike our work, much of this research assumes that the alternatives to evaluate are predetermined. A notable exception is Gao et al.~\cite{gao2012quality} who, like us, consider jointly the proposal and voting processes. However, unlike us, they consider both processes as distinct with distinct workers assigned to distinct roles as voters or authors, and where the voters evaluate the authors, not their contributions. By contrast, we allow the workers to determine the roles they wish to play and rather consider the effect different incentives have on the roles they chose.

Nearly all of the previously-mentioned work assumes that the workers are synchronized and that they play their moves simultaneously. By contrast, we assume that workers are asynchronous and play sequentially. This makes the analysis somewhat more complex, however it allows to study real-time concerns.

The third major stream of game-theoretic research considered control-theoretic mechanisms, primarily in settings where workers are engaging over relatively long periods of time, as in citizen science. Kamar et al. use such partially-observeable Markov decision processes to study a crowdsourcing problem in which the workers procede in a series of rounds, and describe an efficient algorithm that, in their case, learns an optimal policy~\cite{crowdsynth}.

Social choice theory has also been used in the context of crowdsourcing, say, to power collaborative filtering.
Parkes and Procaccia study social choice functions~\cite{parkes2013dynamic} embedded in Markov decision processes. These systems consider game-theoretic behavior from the requester's rather than the worker's perspective. We, on the other hand, are concerned with the strategies of the workers.

Hemaspaandra et al.~\cite{hemaspaandra2013complexity} consider online voting where the agents vote sequentially and each agent may know the votes of the preceding agents, but not the succeeding ones. The alternatives to choose from, however, are know in advances to all agents. They are primary concerned with manipulation in the setting. In our setting, the alternatives available change dynamically over time. Also, we do not directly study manipulation in this paper. We discuss both of this issues in the limitations section.

Like most social choice research, both works above assume that all alternatives are known to all the beforehand. By contrast, the dynamic discovery of the alternatives (proposals) is a central goal of our system, and so we cannot assume that the alternatives are known beforehand.

\section{The Propose-Vote-Abstain Mechanism}

Each propose-vote-abstain \emph{round} begins with a \emph{request}---for information or an answer to a
specific question---by the \emph{requester}. Workers are then recruited into the system.

As soon as each worker joins, he or she is presented with the request and all the contributions proposed so far by the previous workers. The new worker is then offered  the following choices: \emph{propose} a new contribution, for a payoff of
$\pi$ if that answer eventually receives the most votes,  \emph{vote}
for one of the contribution that has already been proposed, for a payoff of $\nu$ if that
contribution eventually receives the most votes, or abstain for an unconditional payoff of $\alpha$. If at termination more than one contribution has the most votes, then one of them is selected uniformly at random as the winner.

Crucially, no worker knows the number of votes each existing contribution has received, nor how many workers have already played, nor how many workers total have been recruited. This helps curb the \emph{majority effect}, though obviously a strategic player would vote for the candidate believed to be the one most preferred by the other voters.

The process continues until some stopping condition is reached. This could be after a fixed number of workers respond, or one (or more) contribution receives a certain minimum number of votes.

In this paper, we assume that the payoff structure $(\pi, \nu, \alpha)$ is fixed per round (i.e., the same for all workers) and only consider single round settings, i.e., each worker makes exactly one move (vote, propose, abstain). Multiround settings are also natural to consider, especially in the context of highly interactive, deliberative group problem solving, or in realtime streaming situations when the mechanism must continuously emit solutions (or receive new requests based on previous solutions). However, single round settings are common enough to use them in this first study on the mechanism. 

It is also reasonable to consider variations in which $(\pi, \nu, \alpha)$ vary dynamically within a single round. Likewise, the requirement that only one alternative be chosen or that plurality is the social choice function can be generalized. Such considerations are beyond the scope of this paper.

\section{Modeling Worker Behavior}
From the theoretical perspective we view the propose-vote-abstain mechanism as a single round, sequential-play, extensive-form game.
Doing so helps us to reason about how different incentive structures may
lead to different outcomes. Here we provide a number of baseline results,
and give a few glimpses into how the same framework can be used to model
real-world behavior.  Our assumptions are inspired by a scenario that is arguably not likely to occur in practice, but which here provides very useful baseline.

\subsection*{Catastrophic Freeloading}

We call this scenario \emph{Catastrophic Freeloading}. It describes a situation in which no worker is willing to put any effort into proposing or voting, but is willing to randomly vote or propose if doing so yields a higher expected payoff than abstaining. We describe it in terms of a belief model in which each worker is completely oblivious to the choices available and, furthermore, believe that all other workers are similarly obviously. 

This scenario is important to consider because it describes the amount of data produced by a completely ignorant and unmotivated (but strategically clever) workforce and estimates the cost (in worker payments) to the system for that data.

In particular, we make the following assumptions.

\begin{enumerate}
\item \label{it:indy} The game has a single turn with a indeterminate (and unknown to
  the workers) number of players.
  \item \label{it:sequence} Workers play sequentially, one at a time.
\item The game does not terminate unless there is at least one answer
  proposed and one vote. 
\item \label{it:uni} We assume that all players have equal confidence in any
  alternative being better than another, including those the players themselves may propose. 
\item \label{it:info} The only information the workers know about the system are the
  candidate answers and the intial request for proposals.
\item \label{it:fixed-cost} The tasks required of the worker are simple and require a fixed time and energy commitment.
\end{enumerate}

Assumptions (\ref{it:uni})--(\ref{it:info}) mean effectively that the only information we have to distinguish one state (i.e., set of candidate answers) from another is the number of alternatives $m_t$ 
proposed so far, where $t$ is the current time. (\ref{it:uni}) lets us assume that the votes cast at any time are uniformly distributed at random from among all alternatives available at that time. That is, \emph{we assume here that voting is effectively random selection}.  Essentially, this means that all workers are completely ignorant about the alternatives available to them, and they believe that all the other workers share this level of ignorance. Thus, the most effective way to choose an alternative is by blind guessing. It ignores shared biases, basic and expert knowledge, irrational behavior, and many other factors that might lead to more complex behavior. 

These catastrophic freeloading assumptions are obviously very strong, so much so that any notion of ``preference'' over the set of proposals is all but erased. We take it to be a baseline condition, against which assumptions accounting for worker preferences can be measured. This is, admittedly, a small step, but a necessary one, for how to model worker preferences and---as is crucial in game theoretic considerations---\emph{beliefs about other workers' preferences}---is not a straightforward problem, nor one that can be easily generalized across a potentially broad spectrum of potential applications.

An alternative view of these assumptions is they reflect the actions of a worker that is completely disengaged in the process, i.e., ``phoning it in.'' It could be seen as an extreme case where all workers are freeloaders who want to maximize their expected reward while spending a minimum amount of time.

Regarding (\ref{it:fixed-cost}), in prior, auction-based research \cite{moldovanu2001optimal} workers had complete information about the game and were allowed to choose strategically the amount of effort to expend on the problem. Under those assumptions the purpose of the mechanism was to provide inventives for the voters to provide the right amount of effort. In microtask settings, such as the one on which our experiment is based, workers tend to be engaged for short amounts of time and are not expected to bring much effort to any one transaction. Thus, from a design perspective it is much more important to focus on the \emph{beliefs} and \emph{knowledge} the workers can bring to the task right now, as opposed to the amount of effort they could spend on more open-ended tasks.

With these assumptions in mind, Theorem \ref{thm:main} tells us quite a bit about the baseline behavior of the crowd, from a mechanism design perspective. In particular, it considers, given a 

\begin{theorem}\label{thm:main}
Given some target number of candidate proposals $m \in \mathbb{N}$, any cost structure $(\pi, \nu, \alpha)$ that satisfies $(m+1)\cdot \alpha > \nu > m \cdot  \alpha$ and $(m+1) \cdot \nu > \pi > m \cdot \nu$ generates a dominant strategy in which the first $m$ workers propose and the remaining workers vote.
\end{theorem}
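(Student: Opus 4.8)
The plan is to reduce the theorem to a single per-worker decision problem parameterized only by the number $k$ of candidate answers the worker observes on arrival. This reduction is exactly what assumptions (\ref{it:uni})--(\ref{it:info}) buy us: since each worker is oblivious and believes that every vote (its own, the past ones, and all future ones) is cast uniformly at random among the then-available alternatives, its expected payoffs can depend on nothing about the other players except $k$. I would first write down the three expected payoffs as functions of $k$. Abstaining yields the sure payoff $\alpha$. Voting for one of the $k$ existing alternatives yields $\nu$ times the probability that this alternative is the plurality winner, which by the symmetry of (\ref{it:uni}) equals $\nu/k$ (all $k$ alternatives are interchangeable, with ties broken uniformly). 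Proposing yields $\pi\cdot q(k)$, where $q(k)$ is the probability that a freshly introduced $(k+1)$-st alternative wins. The base case $k=0$ is immediate: there is nothing to vote for, so the worker either proposes (which wins for sure once the one mandatory vote arrives, payoff $\pi$) or abstains, and $\pi>\alpha$ follows from the hypotheses.

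Next I would establish the two decision thresholds, which is where the two chained inequalities do their work. For voting versus abstaining, $\nu/k>\alpha \iff k<\nu/\alpha$; since the hypothesis $m\alpha<\nu<(m+1)\alpha$ says precisely that $\nu/\alpha\in(m,m+1)$, voting beats abstaining exactly when $k\le m$. For proposing versus voting I would compare $\pi\,q(k)$ against $\nu/k$ and show, using $m\nu<\pi<(m+1)\nu$ (i.e.\ $\pi/\nu\in(m,m+1)$), that proposing wins exactly when $k\le m-1$, that is, when $k<m$. Combining the two comparisons then pins down the unique best action at every state: for $k<m$, proposing beats voting and (transitively, since voting beats abstaining throughout this range) beats abstaining, so proposing is the strict best response; for $k=m$, voting beats abstaining by the first threshold and beats proposing by the second, so voting is the strict best response; and for $k>m$, abstaining is best, a case I would note never arises on the equilibrium path.

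With the per-state optimal action in hand, I would finish by a short induction on arrival order. If the first $i-1$ workers play these best actions, then worker $i$ with $i\le m$ observes exactly $k=i-1<m$ alternatives and hence proposes, while every worker $i>m$ observes exactly $k=m$ alternatives (no further proposals are ever made) and hence votes; this realizes exactly the claimed profile. To upgrade ``best response at each observed $k$'' to ``dominant strategy,'' I would invoke the belief model once more: because voting is treated as uniform-random selection independent of any strategic reasoning, a worker's expected payoff for each action is a function of the observed $k$ alone and does not vary with the particular choices of the other workers, so the identified action is optimal against every opponent profile, not merely against this one.

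The main obstacle is the computation of $q(k)$, the win probability of a just-proposed alternative, and verifying that the second inequality makes the propose/vote threshold land exactly at $k=m$ rather than off by one. This step is delicate because a late proposal is \emph{not} symmetric with the established alternatives---it has had fewer opportunities to accumulate the uniformly random votes---and because the unknown horizon (the number of remaining voters, by assumption~(\ref{it:indy})) must be handled so that $q(k)$ is well defined. Getting the resulting threshold to match $\pi/\nu\in(m,m+1)$ exactly, simultaneously for all $k$ and not merely at the boundary $k=m$, is the crux of the argument; the voting-versus-abstaining threshold, by contrast, is a one-line consequence of comparing $\nu/k$ with $\alpha$ and should present no difficulty.
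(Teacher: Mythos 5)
Your skeleton (a per-state analysis indexed by the number $k$ of observed proposals, two thresholds driven by the two chained inequalities, then induction on arrival order) matches the paper's proof in outline, and your voting-versus-abstaining threshold is exactly the paper's argument (Proposition~\ref{prop:basics}.\ref{it:abstain_not_vote} and Proposition~\ref{prop:dynamics}.\ref{it:whenabstain}--\ref{it:whenvote}). But the step you yourself flag as ``the crux''---computing the win probability $q(k)$ of a fresh proposal and getting the propose/vote boundary to land at $k=m$---is left unresolved in your proposal, and the route you sketch for it is not the one that works. The boundary at $k=m$ is \emph{not} obtained by a ratio comparison of $\pi\, q(k)$ against $\nu/k$ using $\pi/\nu\in(m,m+1)$. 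The paper argues by backward induction on future play: if a worker proposes at state $k=m$, the pool grows to $m+1>\nu/\alpha$ alternatives, after which abstaining strictly beats voting for every subsequent worker; hence the new proposal can never receive a vote, and the expected payoff for proposing at $k=m$ is $0$ outright. What kills proposing at $k=m$ is therefore the \emph{first} inequality, $(m+1)\alpha>\nu$, not the ratio $\pi/\nu$. Conversely, at states $0<k<m$ future workers keep proposing until the pool reaches the unique voting state $m=\lfloor\nu/\alpha\rfloor$, and only then are votes cast, uniformly over all $m$ alternatives; hence $q(k)=1/m$ for every $k<m$, and the inequality $\pi>m\nu$ gives $\pi/m>\nu\geq\nu/k$, so proposing beats voting there. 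Note that this also dissolves the asymmetry worry you raise about late proposals: on the play path induced by these strategies, no votes at all are cast before the pool stops growing, so all alternatives are symmetric when voting begins. In the paper this is packaged as the recursion of voting states $m^{\nu}_0=\lfloor\nu/\alpha\rfloor$, $m^{\nu}_{i+1}=\lfloor \nu\cdot m^{\nu}_i/\pi\rfloor$ (as the proof of Theorem~\ref{thm:main} uses it), with the hypotheses forcing $m^{\nu}_0=m$ and $m^{\nu}_1=0$, i.e., $m$ is the only positive voting state.

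A second, related flaw: your closing claim that each action's expected payoff ``is a function of the observed $k$ alone and does not vary with the particular choices of the other workers,'' used to upgrade best responses to a dominant strategy, is false and contradicts your own definition of $q(k)$. The payoff for proposing depends entirely on whether and over how many alternatives future workers vote---if all subsequent workers abstained, a proposal would be worth nothing. The paper's ``dominant strategy'' is really sequential rationality under the catastrophic-freeloading belief model (each worker assumes all others are equally oblivious and optimize accordingly), computed by backward induction from the terminal voting state; it is not optimality against arbitrary opponent profiles, and no correct proof can deliver that stronger property.
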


Theorem \ref{thm:main} follows from the propositions below. It tells us that, when workers are maximally uncertain about the choices at hand then (1) the abstention payoff provides an incentive for limiting the number of proposals, \emph{though not the number of votes}; and (2) the workers monotonically propose first and vote later, never returning again to propose.

Before we present these propositions, let us first provide some useful notation and observations. Let $m_t$ denote the number of candidate proposals present for the $t$th worker to choose from. Note that under the above assumptions this is effectively all that the worker observes about the candidate proposals, as he or she is completely oblivious to the content of each proposal. Thus, all we need to know about the state in order to model worker behavior is $m_t$ for each worker $t$. Thus, for any other worker $t'$, if $m_t = m_t'$ then $t$ and $t'$ have the same dominant strategy. In particular, if some dominant strategy results in $m_{t+1} = m_t$ then all remaining workers will repeat this action for the rest of the game. This is the case whenever the dominant strategy at $m_t$ is to vote or to abstain. If the dominant strategy is to propose, then $m_{t+1} = m_t + 1$. This simple observation is key to understanding the propositions. 

Let $S^*(m_t)$ denote the dominant strategy for $m_t$. 

Proposition \ref{prop:basics} provides basic results about worker strategies, in particular sheds light on the role that abstention plays.
\begin{proposition}\label{prop:basics}
Under assumptions (\ref{it:uni})--(\ref{it:info}):
\begin{enumerate}
\item \label{it:abstain}  If $\alpha \geq \min\{\pi, \nu\}$ then abstaining is a dominant strategy for all players.
\item \label{it:abstain_not_propose}  If $\min\{\pi,\nu\} / (m_t + 1) < \alpha$ then for worker $t$ abstaining has a greater expected payoff than proposing.
\item \label{it:abstain_not_vote} If $\nu/ m_t  < \alpha$ then for worker $t$ abstaining has a greater expected payoff than voting.
\end{enumerate}
\end{proposition}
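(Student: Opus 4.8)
The plan is to reduce all three parts to two elementary expected-payoff computations plus one monotonicity observation, and then combine them. First I would pin down the per-state payoffs forced by the symmetry assumption (\ref{it:uni}). Since a voter splits a unit of winning probability uniformly over the $m_t$ current candidates, and the candidate count never decreases, the expected payoff of voting at state $m_t$ is at most $\nu/m_t$; likewise a proposer's fresh candidate is, by (\ref{it:uni}), just one of the (at least) $m_t+1$ symmetric candidates, so it wins with probability at most $1/(m_t+1)$ and proposing is worth at most $\pi/(m_t+1)$. Part (\ref{it:abstain_not_vote}) is then immediate: if $\nu/m_t<\alpha$, voting's payoff lies below the unconditional $\alpha$, so abstaining dominates voting.

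For part (\ref{it:abstain_not_propose}) I would split on which of $\pi,\nu$ realizes the minimum. If $\pi\le\nu$, the hypothesis reads $\pi/(m_t+1)<\alpha$, and the symmetry bound $P(m_t)\le\pi/(m_t+1)$ finishes it directly. The interesting case is $\nu<\pi$, where the crude bound $\pi/(m_t+1)$ is useless and the $\nu$ in the minimum must do the real work. Here the hypothesis is $\nu/(m_t+1)<\alpha$, and I would argue that the proposer's new candidate can only ever be voted for by \emph{later} workers; but every later worker faces a state $m_{t'}\ge m_t+1$ (the count is non-decreasing), at which part (\ref{it:abstain_not_vote}) shows voting is strictly worse than abstaining. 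A rational worker therefore never votes after this proposal, so the new candidate ends with zero votes; since by assumption (3) the game returns a winner only once some vote has been cast, a zero-vote candidate cannot win, and proposing is worth $0<\alpha$. Combining the two cases yields part (\ref{it:abstain_not_propose}).

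Finally, part (\ref{it:abstain}) follows by applying the bound $P(m_t)\le\min\{\pi,\nu\}/(m_t+1)$ at the very start of the game. The first worker sees $m_t=0$ and may only propose or abstain; since $\min\{\pi,\nu\}/(0+1)=\min\{\pi,\nu\}\le\alpha$, abstaining is (weakly) best, so the worker abstains and leaves the candidate set empty. By the observation preceding the proposition, any action leaving $m_{t+1}=m_t$ is repeated by all subsequent workers, so the count stays at $0$, no vote is ever possible, and every worker abstains.

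I expect the main obstacle to be justifying the $\nu$-factor in part (\ref{it:abstain_not_propose}) cleanly, i.e.\ turning the informal ``nobody will bother to vote, so my proposal is worthless'' intuition into a rigorous statement about expected payoff. The care needed is (i) to use the non-decreasing candidate count so that part (\ref{it:abstain_not_vote}) applies uniformly to \emph{all} later states, and (ii) to invoke assumption (3) to rule out a zero-vote winner (including the degenerate $m_t=0$ case, where the absence of any vote simply means the mechanism returns nothing). A secondary subtlety is the boundary behaviour of part (\ref{it:abstain}) when $\alpha$ equals the minimum, which I would resolve by breaking ties toward abstaining, consistent with the freeloading premise that workers prefer the lowest-effort action.
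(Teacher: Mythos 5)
Your proof is correct and takes essentially the same approach as the paper's: the uniform-symmetry upper bounds $\nu/m_t$ and $\pi/(m_t+1)$ for voting and proposing, together with the key observation---used for part (2) when $\nu < \pi$---that proposing would leave every later worker preferring abstention to voting, so the new proposal gathers no votes and cannot win. You merely make explicit what the paper compresses, namely the case split on which of $\pi,\nu$ attains the minimum, the tie-breaking convention at equality, and the argument for part (1), which the paper dismisses as obvious.
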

\begin{proof}
(\ref{it:abstain}) is obvious.

(\ref{it:abstain_not_propose})---(\ref{it:abstain_not_vote}) follow from assumption (\ref{it:uni}), in that having uniform confidence in each candidate--and assuming that all workers feel similarly---means that any voter is  equally likely to chose any candidate currently available, so that the expected payoff for proposing or voting is upper bounded by dividing the base reward $\pi$ or $\nu$  among the number of proposals available to vote on after the worker moves, which is $m_t + 1$ if worker $t$ proposes and $m_t$ if worker $t$ votes.  For (\ref{it:abstain_not_propose}), note additionally that  whenever proposing would create so many candidates as to give voting a lower expected payoff than abstaining, proposing has a lower expected payoff than abstaining.
\end{proof}

Next, we study conditions that lead to proposal and voting.

\begin{proposition}\label{prop:one}
If $\nu > \pi > \alpha$ then $S^*(1) = \mathrm{vote}$.
\end{proposition}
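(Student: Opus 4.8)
The plan is to compare the three expected payoffs available to a worker who arrives at a state with $m_t = 1$, i.e.\ a single lone candidate already on the table, and to show that under $\nu > \pi > \alpha$ voting strictly dominates both proposing and abstaining. Abstaining pays $\alpha$ unconditionally, so the whole argument reduces to showing that voting yields strictly more than $\alpha$ and strictly more than the best payoff achievable by proposing.

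First I would pin down the expected payoff of voting at $m_t=1$. The key observation from the discussion preceding Proposition~\ref{prop:basics}---that two workers who see the same number of candidates have the same dominant strategy, and that a worker who votes or abstains leaves $m_{t+1}=m_t$---means that once the count reaches $1$ and the strategy there is to vote, the count never grows: every subsequent worker again sees a single candidate and again votes for it. Hence the lone candidate accumulates every vote cast and, since the game terminates only once at least one vote exists, wins with certainty. Voting therefore has expected payoff exactly $\nu$ (equivalently, the upper bound $\nu/m_t$ underlying item~\ref{it:abstain_not_vote} is tight at $m_t=1$).

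Next I would bound the other two actions. Abstaining pays $\alpha$, and by hypothesis $\alpha < \pi < \nu$, so voting beats abstaining. Proposing raises the count to $m_t+1 = 2$; by the symmetry forced by assumption~\ref{it:uni} (uniform, effectively random votes), a freshly proposed answer is one of $m_t+1$ symmetric candidates and so wins with probability at most $1/(m_t+1)=1/2$. Thus, exactly as in the proof of Proposition~\ref{prop:basics}, the expected payoff of proposing is at most $\pi/(m_t+1)=\pi/2 < \pi < \nu$. Combining, voting's payoff $\nu$ strictly exceeds both $\alpha$ and $\pi/2$, so $S^*(1)=\mathrm{vote}$.

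The one genuinely delicate point---and the step I expect to be the main obstacle---is the claim that voting at $m_t=1$ earns the full $\nu$ rather than merely the generic bound $\nu/m_t$. This is the assertion that the lone candidate wins with certainty, which holds only if no later worker proposes a rival; that in turn is exactly the conclusion $S^*(1)=\mathrm{vote}$ we are trying to establish, so the reasoning is circular unless it is set up as a fixed point. I would therefore present it as a self-consistency check: conjecture $S^*(1)=\mathrm{vote}$, compute that under this conjecture voting yields $\nu$ while the rival actions yield at most $\pi/2$ and $\alpha$, and conclude that voting is the unique (strictly) best response at $m_t=1$, which both confirms the conjecture and identifies it as the dominant strategy for that state.
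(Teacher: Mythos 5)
Your proposal is correct, but there is nothing in the paper to compare it against in detail: the paper's entire ``proof'' of this proposition is the sentence ``Proposition 2 is obvious.'' Your argument supplies exactly the details the authors waved away, and it does so in the style the paper itself uses elsewhere. In particular, the circularity you flag---that voting at $m_t=1$ earns the full $\nu$ only if no later worker proposes, which is the very conclusion being established---is real, and your fixed-point/self-consistency resolution is the same reasoning pattern the authors invoke in Proposition 3, item 2, where payoffs are computed ``assuming all subsequent workers act to optimize their expected payoffs.'' Your treatment of the deviation payoff is also the right one: the bound $\pi/(m_t+1)=\pi/2$ for proposing holds by the symmetry of assumption 4 \emph{regardless} of what future workers do at state $2$, so the comparison $\nu > \pi > \pi/2$ and $\nu > \alpha$ does not require knowing $S^*(2)$, which is what makes the self-consistency check close cleanly. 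One caveat worth stating if you write this up: what you establish is that vote is the unique best response under the equilibrium (self-consistent) continuation, not dominance against \emph{arbitrary} continuations---if future workers irrationally flooded the game with proposals, abstaining could beat voting. But the paper's notion of $S^*$ is precisely this equilibrium sense, so your proof matches the intended claim; indeed it is more careful than the paper's own non-proof.
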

Proposition \ref{prop:one} is obvious.

While the earlier propositions provide important boundary results, Proposition \ref{prop:dynamics} provides insight on the system state as the game unfolds. The proposition works backwards, but finding a state in which workers vote. Since voting does not change the state, the system will remain in this state until termination. We then use structural induction to work backwards from this voting state to characterize the rest of the state space.

\begin{proposition}\label{prop:dynamics}
If $\pi > \nu > \alpha$ then:
\begin{enumerate}
\item \label{it:whenabstain} If $m_t > \nu/\alpha$, then $S^*(m_t) = \mathrm{abstain}$.
\item \label{it:whenvote} If $m_t + 1 > \nu/\alpha > m_t > 0$ then $S^*(m_t) = \mathrm{vote}$.
\item \label{it:when_prop_or_vote} For all $t,w, m_t, m_w \in \mathbb{N}$ such that $w > t$, $m_w > m_t > 0$, $S^*(m_w) = \mathrm{vote}$ and, for all $u : m_w > m_u > m_t$, it holds that $S^*(m_u) = \mathrm{propose}$: 
\begin{enumerate}
 \item If $m_t > \nu \cdot m_w /\pi$ then $S^*(m_t)  = \mathrm{propose}$.
 \item If $m_t < \nu\cdot  m_w /\pi$ then $S^*(m_t)  = \mathrm{vote}$.
\end{enumerate}
\end{enumerate}

\vspace{.3cm}

Let $m^{\nu}_0$ denote $\lfloor \nu/\alpha \rfloor$ and, for all $i \geq 0$, let $m^{\nu}_{i+1} = \lfloor \pi \cdot  m^{\nu}_i / \nu \rfloor$.

\begin{enumerate}
 \setcounter{enumi}{3}
\item \label{it:whenever-vote} For all $i \geq 0$ if $m^{\nu}_i > 0$ then $S^*(m^{\nu}_i) = \textrm{vote}$. 
\item \label{it:whenever-propose} For all $m_t$ such that $m_t \leq \nu/\alpha$ and $m_t \not\in \{m^{\nu}_i\}$, $S^*(m_t) = \textrm{propose}$.
\end{enumerate}
\end{proposition}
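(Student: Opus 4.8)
The plan is to prove statements (\ref{it:whenever-vote}) and (\ref{it:whenever-propose}) simultaneously, by structural induction that reads the state space from the top downward. The sequence $\{m^\nu_i\}$ should be understood as the list of \emph{successive voting states}, ordered $m^\nu_0 > m^\nu_1 > \cdots$, so that the two claims together assert that the positive voting states in the range $(0,\nu/\alpha]$ are \emph{exactly} the $m^\nu_i$, and that every other state in that range proposes. Hence the content to establish inductively is twofold: (i) each $m^\nu_i$ with $m^\nu_i>0$ is a voting state, and (ii) every state strictly between two consecutive voting states proposes. The hypotheses $\pi>\nu>\alpha$ are exactly what make this work: $\nu>\alpha$ forces $m^\nu_0=\lfloor\nu/\alpha\rfloor\geq 1$, and $\pi>\nu$ makes the threshold of statement (\ref{it:when_prop_or_vote}) strictly contractive, so the sequence decreases and eventually falls below $1$, terminating.

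For the base case I would invoke statement (\ref{it:whenvote}). Since $m^\nu_0=\lfloor\nu/\alpha\rfloor$ satisfies $m^\nu_0+1>\nu/\alpha>m^\nu_0>0$ (the middle strict inequality holding whenever $\nu/\alpha\notin\mathbb{N}$, and $m^\nu_0>0$ because $\nu>\alpha$), statement (\ref{it:whenvote}) gives $S^*(m^\nu_0)=\mathrm{vote}$ at once. Statement (\ref{it:whenabstain}) complements this by showing there is nothing above $m^\nu_0$ to worry about: every state exceeding $\nu/\alpha$ abstains, so $m^\nu_0$ is genuinely the topmost voting state and anchors the induction.

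For the inductive step, suppose $m^\nu_0>\cdots>m^\nu_i$ have all been shown to vote, with every state strictly between consecutive ones shown to propose. I would apply statement (\ref{it:when_prop_or_vote}) with $m_w=m^\nu_i$. Starting at $m_t=m^\nu_i-1$ and decreasing, the hypothesis of (\ref{it:when_prop_or_vote})---that $S^*(m_w)=\mathrm{vote}$ and every state strictly between $m_t$ and $m_w$ proposes---holds vacuously at first and is preserved inductively for as long as part (a) returns $\mathrm{propose}$, i.e.\ for as long as $m_t>\nu\, m^\nu_i/\pi$. The first state at or below that threshold is a voting state by part (b); this state is $m^\nu_{i+1}$, the largest integer lying below the threshold $\nu\, m_w/\pi$ of statement (\ref{it:when_prop_or_vote}). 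The single application thus simultaneously certifies that $m^\nu_{i+1}$ votes and that every state in the open interval $(m^\nu_{i+1},m^\nu_i)$ proposes, which is precisely what is needed to restart the induction. Iterating until the threshold drops below $1$ sweeps out all of $(0,\nu/\alpha]$ and yields both (\ref{it:whenever-vote}) (the $m^\nu_i$ vote) and (\ref{it:whenever-propose}) (everything else in range proposes).

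I expect the main obstacle to be bookkeeping at the boundaries rather than any conceptual gap. Two points need care. First, I must verify at each step that the hypotheses of statement (\ref{it:when_prop_or_vote}) are genuinely in force for the particular $m_t$ I feed it; this is exactly what the running induction supplies, but it must be checked that no voting state is skipped inside an interval. Second, and more delicately, statement (\ref{it:when_prop_or_vote}) is silent at the borderline $m_t=\nu\, m^\nu_i/\pi$, where proposing and voting have equal expected payoff; I would fix this tie by a fixed convention and then confirm that, at every step, the closed-form value of $m^\nu_{i+1}$ agrees with ``the largest integer below $\nu\, m^\nu_i/\pi$'' produced by statement (\ref{it:when_prop_or_vote}), paying particular attention to whether $\nu\, m^\nu_i/\pi$ is itself an integer. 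The same tie-breaking convention would handle the degenerate base case $\nu/\alpha\in\mathbb{N}$, where voting and abstaining coincide in value at $m^\nu_0$.
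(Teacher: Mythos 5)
Your attempt proves only items (\ref{it:whenever-vote}) and (\ref{it:whenever-propose}): items (\ref{it:whenabstain}), (\ref{it:whenvote}), and (\ref{it:when_prop_or_vote}) are invoked as known tools, yet they are themselves claims of the proposition, and nothing in your write-up establishes them. This is the genuine gap. The paper does prove them: item (\ref{it:whenabstain}) follows from Proposition \ref{prop:basics}.\ref{it:abstain_not_vote}; item (\ref{it:whenvote}) requires the argument that at the boundary state voting beats abstaining while proposing is worth essentially nothing, because proposing pushes the count above $\nu/\alpha$, after which all subsequent workers abstain and the new proposal can never win; item (\ref{it:when_prop_or_vote}) requires the observation that, under its hypotheses, play climbs deterministically to $m_w$ proposals and then freezes, so proposing now is worth $\pi/m_w$ while voting now is worth $\nu/m_t$, both exceed $\alpha$, and the stated threshold $\nu\cdot m_w/\pi$ is exactly where the two cross. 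These short expected-payoff arguments carry the substantive game-theoretic content of the proposition, and your induction can neither start (base case uses item (\ref{it:whenvote})) nor step (inductive step uses item (\ref{it:when_prop_or_vote})) without them; as a proof of the full statement, yours is incomplete.

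On items (\ref{it:whenever-vote})--(\ref{it:whenever-propose}) themselves, your downward induction is precisely the paper's argument---its entire proof of these two items is the sentence that they ``follow by applying item (\ref{it:when_prop_or_vote}) inductively to each $m^{\nu}_i$''---but you execute it with more care, and two of your observations are genuinely valuable. First, you correctly reconstruct $\{m^{\nu}_i\}$ as the \emph{decreasing} sequence of voting states, with $m^{\nu}_{i+1}$ the largest integer not exceeding $\nu \cdot m^{\nu}_i/\pi$. The recursion printed in the statement, $m^{\nu}_{i+1} = \lfloor \pi \cdot m^{\nu}_i/\nu \rfloor$, must be a typo: as written the sequence is nondecreasing, items (\ref{it:whenabstain}) and (\ref{it:whenever-vote}) would then contradict each other for $i \geq 1$, and the proof of Theorem \ref{thm:main}, which needs $m^{\nu}_1 = 0$, would collapse. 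Second, you flag the integer-boundary ties ($\nu/\alpha \in \mathbb{N}$, or $\nu \cdot m^{\nu}_i/\pi \in \mathbb{N}$) at which items (\ref{it:whenvote}) and (\ref{it:when_prop_or_vote}) are silent; the paper ignores these entirely. So the part you prove is proved well---arguably better than in the paper---but three of the five claims are left unproven.
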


\begin{proof}
Item (\ref{it:whenabstain}) follows from Proposition \ref{prop:basics}.\ref{it:abstain_not_vote}.

For item (\ref{it:whenvote}), if $m_t + 1 > \nu/\alpha > m_t$ then voting has a higher expected payoff than abstaining, as long as no one else proposes. But proposing would make the expected payoff for voting less than for abstaining. Consequently, the expected payoff for proposing is zero at this point (assuming all subsequent workers act to optimize their expected payoffs). Thus voting has the highest expected payoff.

For item (\ref{it:when_prop_or_vote}), if $m_t > \nu \cdot m_w / \pi$ then $\pi/m_w$, the expected payoff for proposing is greater than  $\nu/m_t$ the expected payoff for voting, which is greater than $\alpha$ (otherwise $S^*(m_w) \neq \mathrm{vote}$). If $m_t < \nu\cdot  m_w /\pi$ then by a similar argument the expected payoff for voting is greater than proposing.

Items (\ref{it:whenever-vote}) and (\ref{it:whenever-propose}) follows by applying item (\ref{it:when_prop_or_vote}) inductively to each $m^{\nu}_i$.

\end{proof}

Items (\ref{it:whenever-vote}) and (\ref{it:whenever-propose}) show that even in the relatively small state space we consider here that the space of dominant strategies is fairly rich, varying between proposing and voting for different values of $m_t$. However, little of it is reachable in practice, as the state does not change one workers start voter.

Finally Proposition \ref{prop:initial} provides initial conditions.

\begin{proposition}\label{prop:initial}
If $\alpha < \min\{\pi, \nu\}$ then $S^*(0) = \textrm{propose}$.
\end{proposition}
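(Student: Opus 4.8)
The plan is to observe that at the initial state $m_t = 0$ there are no candidates yet, so voting is not a legal move and the worker's only genuine choice is between proposing and abstaining; since abstaining pays the fixed amount $\alpha$, it suffices to show that proposing has strictly larger expected payoff. First I would compute the expected payoff of proposing. Under the uniform-voting assumption (\ref{it:uni}), the candidate a worker creates is statistically indistinguishable from every other candidate, so if $M$ denotes the number of candidates present at termination under dominant play from this point on, then by exchangeability the created candidate is the plurality winner (ties broken uniformly) with probability exactly $1/M$, and the expected payoff of proposing is $\pi/M$. Thus the proposition reduces to showing $\pi/M > \alpha$, i.e. $M < \pi/\alpha$, where $M$ is the terminal candidate count.

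The key structural fact I would use is the earlier observation that once any worker votes the state is frozen (voting does not change $m_t$, and all subsequent workers then repeat the action), so $M$ equals the first value of $m_t \ge 1$ at which the dominant strategy is to vote. To bound this value I would split on the order of $\pi$ and $\nu$. If $\nu > \pi$, then Proposition \ref{prop:one} gives $S^*(1) = \mathrm{vote}$, so $M = 1$ and proposing pays $\pi > \alpha$ outright. If $\pi \ge \nu$, I would invoke Proposition \ref{prop:dynamics}: by items (\ref{it:whenvote}) and (\ref{it:whenever-propose}), every state $1, 2, \dots, \lfloor \nu/\alpha \rfloor - 1$ is a propose state and $\lfloor \nu/\alpha \rfloor$ is the first vote state reached, so marching up from a single proposal the process freezes at $M = \lfloor \nu/\alpha \rfloor$. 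Since item (\ref{it:whenvote}) gives $M < \nu/\alpha \le \pi/\alpha$, we again obtain $\pi/M > \alpha$. Combining the cases, proposing strictly dominates abstaining (and vacuously dominates voting, which is not a legal move), so $S^*(0) = \mathrm{propose}$.

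I expect the main obstacle to be the middle step: correctly identifying $M$, the terminal number of candidates, and verifying the chain of inequalities $M < \nu/\alpha \le \pi/\alpha$ across the two orderings of $\pi$ and $\nu$, since this is where Propositions \ref{prop:one} and \ref{prop:dynamics} must be stitched together. In particular the degenerate boundary cases $\pi = \nu$ and $\nu/\alpha \in \mathbb{Z}$ are not literally covered by the strict hypotheses of those propositions and would need a short direct check that voting (hence freezing) occurs no later than $m_t = \lfloor \nu/\alpha \rfloor$, so that strict dominance still holds. The symmetry computation giving win-probability $1/M$ is conceptually the crux but is immediate from assumption (\ref{it:uni}) and mirrors the expected-payoff reasoning already used in the proof of Proposition \ref{prop:dynamics}.
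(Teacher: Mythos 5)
Your conclusion is right and your skeleton---compute the expected payoff of proposing at the empty state as $\pi/M$, where $M$ is the terminal number of proposals, then bound $M$---is the same as the paper's, but the way you bound $M$ is genuinely different and much heavier than what the paper does. The paper's proof is two sentences: worker $0$'s proposal beats abstaining provided the number of proposals never exceeds $\min\{\pi,\nu\}/\alpha$, and it never can, by Proposition \ref{prop:basics}.\ref{it:abstain_not_propose}, because any worker whose proposal would push the count past that threshold would strictly prefer to abstain. That bound requires no knowledge of where the process freezes, no case split on the ordering of $\pi$ and $\nu$, and none of the boundary patches ($\pi=\nu$, $\nu/\alpha\in\mathbb{Z}$) that you correctly flag as loose ends of your route. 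By contrast, you pin down the freeze point exactly via Propositions \ref{prop:one} and \ref{prop:dynamics}, which carries two caveats. First, your identification $M=\lfloor\nu/\alpha\rfloor$ in the $\pi\ge\nu$ case leans on the literal statement of Proposition \ref{prop:dynamics}.\ref{it:whenever-vote}, whose recursion $m^{\nu}_{i+1}=\lfloor \pi\cdot m^{\nu}_i/\nu\rfloor$ is evidently a typo: it is increasing when $\pi>\nu$, contradicting item \ref{it:whenabstain}, and the proof of Theorem \ref{thm:main} only makes sense with the decreasing recursion $\lfloor \nu\cdot m^{\nu}_i/\pi\rfloor$. Under that intended reading there can be vote states strictly below $\lfloor\nu/\alpha\rfloor$, so the process may freeze earlier than you claim; your argument survives only because all you actually need is $M\le\lfloor\nu/\alpha\rfloor$, which holds either way. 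Second, your route inherits the strict hypotheses $\pi>\nu>\alpha$ and $\nu>\pi>\alpha$ of those propositions, which is precisely why the degenerate cases need hand-checking. What the paper's approach buys is uniformity and robustness: a single inequality from Proposition \ref{prop:basics} caps the proposal count under any ordering of the payoffs. What yours buys, when it applies, is sharper information---the exact terminal state, e.g.\ $M=1$ when $\nu>\pi$---which the paper's coarser bound does not provide.
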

\begin{proof}
If worker $0$ proposes then its expected payoff is greater than abstaining or voting as long as there are never more than $\min\{\nu, \pi\}/\alpha$ proposals. But this will never happen because of Proposition \ref{prop:basics}.\ref{it:abstain_not_propose}.
\end{proof}

With these results in mind, we can now prove Theorem \ref{thm:main}.

\begin{proof}[Proof (of Theorem \ref{thm:main})] For a given target number of proposals $m$ and cost structure $(\pi, \nu, \alpha)$ that satisfies $(m+1)\cdot \alpha > \nu > m \cdot  \alpha$ and $(m+1) \cdot \nu > \pi > m \cdot \nu$, note that the only $i \in \mathbb{N}$ for which $m^{\nu}_i > 0$ is $i = 0$. Thus, by Propositions \ref{prop:dynamics}.\ref{it:whenever-vote}--ref{it:whenever-propose}, $S^*(m) = \mathrm{vote}$ and for all $m'$ between $0$ and $m$ (exclusive), $S^*(m') = \mathrm{propose}$. By Proposition \ref{prop:initial}, $S^*(0) = \mathrm{propose}$.
\end{proof}

What conclusions should a system designer draw from these results? Certainly, in order to making voting or proposing incentive compatible actions, it is obvious that voting must pay more than abstaining, and proposing must pay more than voting. It is also reasonable to assume that the more one pays for each action, the more effort workers will spend to perform them. These results show that there is a another side to the matter: that raising incentives for proposing and voting can also increase the amount of bad data gathered. Of course, the catastrophic freeloader scenario we study here may not apply often in practice. Even when there are freeloaders, they may not dominate the system, and even when they do, it is not clear that the freeloaders would themselves be aware of their dominant situation and so might not adopt the strategies seen here (that is, they might behave differently if they believed that some of the other workers are putting effort and attention into their work). 

The results also suggest that conscientious, informed workers will act when freeloaders may not, and so, to avoid freeloading, the best approach is to make voting pay more than abstention, but less than twice as much, and make proposing pay twice as much as voting, but less than twice as much. Under such a cost structure, Theorem \ref{thm:main} says that under the catastrophic freeloading modeling, one freeloading might propose (and any number of them could vote as long as there is only one proposal), but any remaining proposals and votes would come from workers who more sincerely believe in their choices.
\section{Experiments}

\begin{table}[h]
\centering

\begin{tabular}{r|r|rrrr}
 $\pi$ & $\nu$ & workers & proposals & votes  & abstains\\
 \hline
 \$0.20&\$0.04& 20 & 39 & 60 & 1\\
 \$0.12&\$0.05& 33 & 41 & 117 & 7\\
 \$0.08&\$0.08& 20 & 13 & 86 & 1\\
 \$0.05&\$0.12& 10 & 13 & 36 & 0\\
 \$0.04&\$0.20& 18 & 13 & 76 & 3\\
\end{tabular}
\caption{Basic statistics from the experiments. The payment for abstaining is fixed at \$0.02. Each HIT asks each worker about the same five images, shown in Figure~\ref{fig:images}, for proposal and voting payoffs $\pi$ and $\nu$, respectively.}
\label{tab:basic}
\end{table}

We recruited 111 Mechanical Turk workers and asked them to view a set of 5 images (presented in random order) and either propose, vote for, or abstain from contributing to the image's description.  A round consists of a single image with a fixed (i.e., the same for all workers in the round) payoff structure (i.e., fixed prices) for proposing, voting, and abstaining. All five rounds assigned to each worker had the same price structure There were in total five different payoff structures, leading to 25 rounds total.

Table~\ref{tab:basic} shows the five payoff structures and how the workers and their actions distribute over them.

\begin{figure}
\centering

\begin{tabular}{c}
\begin{tabular}{c}
IMG1\\\includegraphics[width=10pc]{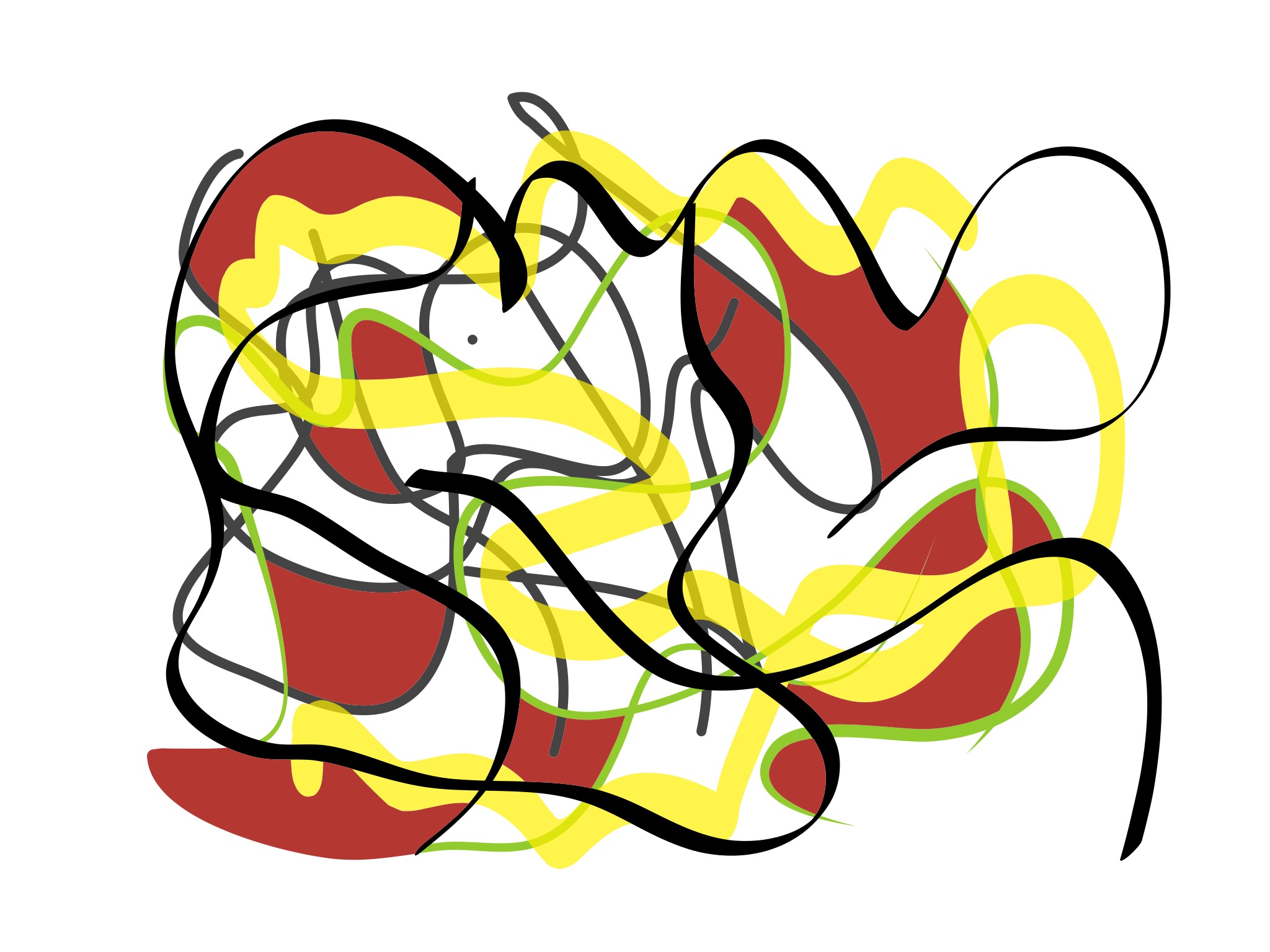} \\
IMG2\\\includegraphics[width=10pc]{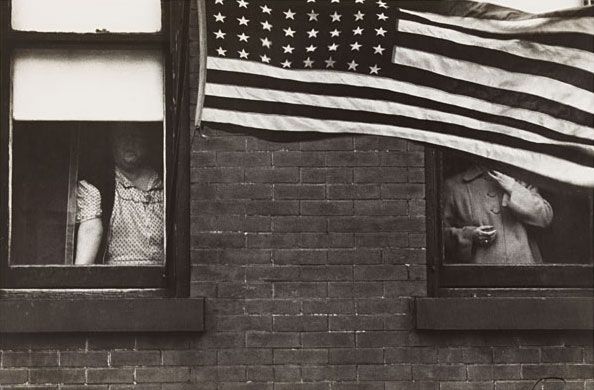} \\
IMG3\\\includegraphics[width=10pc]{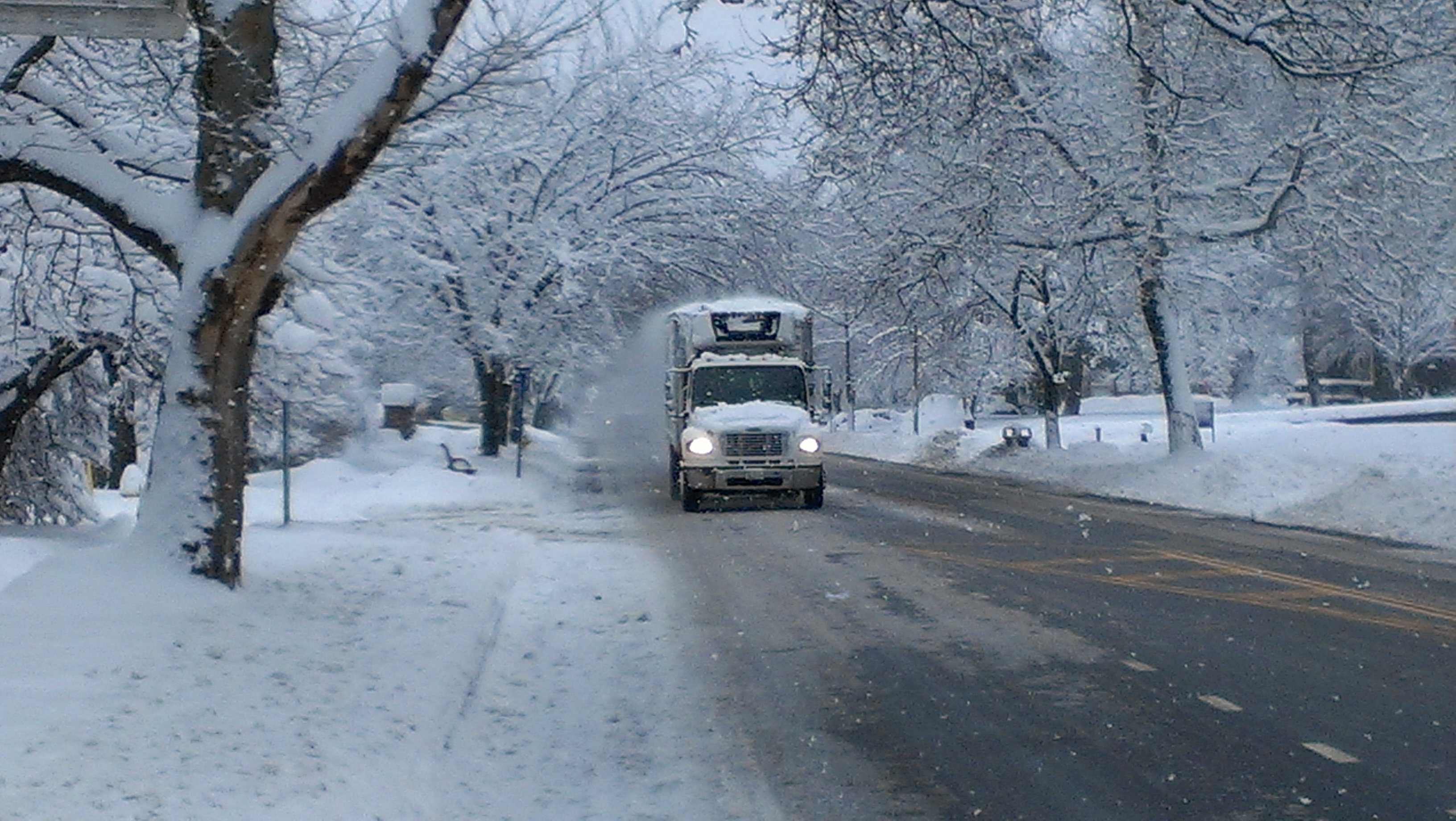} 
\end{tabular}\\
\begin{tabular}{cc}
IMG4 & IMG5\\
\includegraphics[width=6pc]{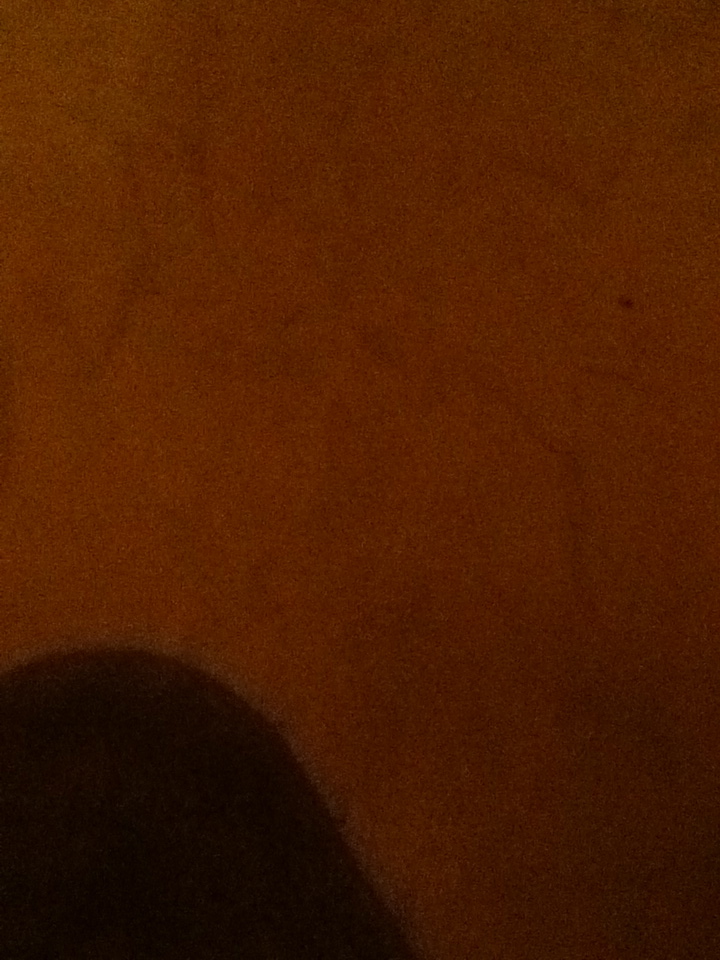} &
\includegraphics[width=6pc]{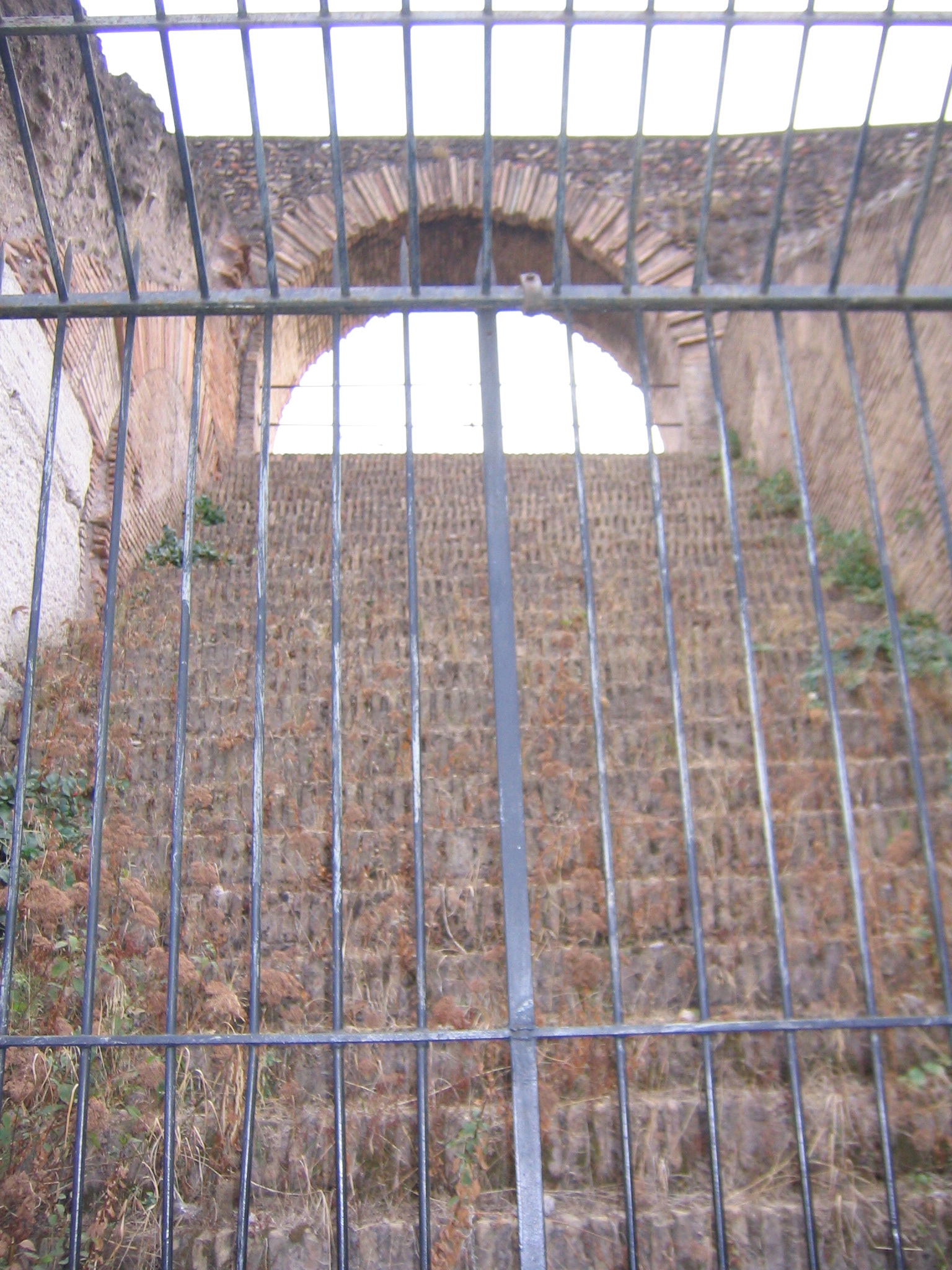} 
\end{tabular}
\end{tabular}

\caption{The images used in the experiments.}
\label{fig:images}
\end{figure}

Figure \ref{fig:images} shows the images to be labeled. In order to not bias the experiments towards specific domains, the images were chosen to be vague.

\begin{figure}
\begin{center}
\hspace{-0.3pc}\includegraphics[width=20pc]{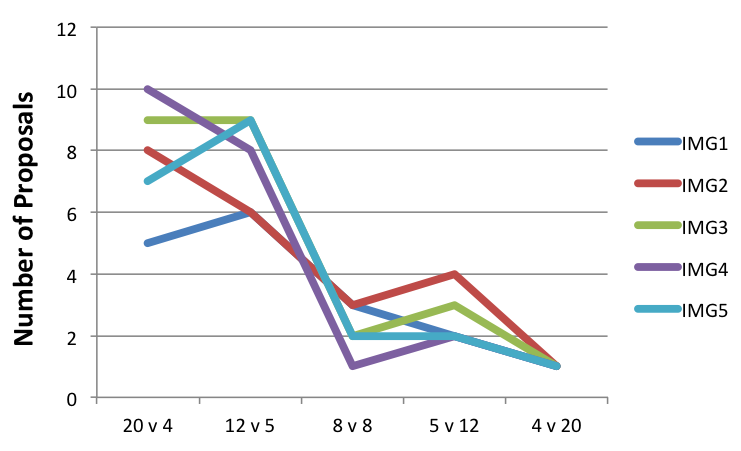}
\vspace{0.3pc}
\caption{Results from varying relative payments for voting and proposing responses to an image description task.}
\vspace{0.8pc}
\label{fig:pics_res}
\end{center}
\end{figure}

Figure \ref{fig:pics_res} shows the results of our tests on the number of proposals generated. As the payment for voting becomes large relative to the proposal payment, the number of total answers generated by the system significantly decreases from an average of 7.8 responses to 1 response for all 5 pictures we saw ($p < .0001$). The decreasing trend was linear with $R^2$ = 0.802. Note that there is a disproportional drop at the break-even point when payment is equal for both options. This is consistent with what we expect because voting requires less effort than generating a response, so there is a slight bias in its favor.

While all of the images eventually converged to a single response as the vote payment increased, the number of responses generated in the opposing case (where the proposal reward is high and workers are incentivized to generate several answers) varies from 5 to 10 responses each. This likely depends on how subjective the image is and how many answer could be considered plausible with high confidence. This trend is seen throughout the results as each response set trends towards a single response. This suggests that the content of the task does play a role in worker trends, but in the convergent limit this can be overridden by financial incentives.

\subsection{Overvoting}

\begin{table}[h]
\centering
\begin{tabular}{rr|rr}
& & number of & number of \\
&& voting & actual \\
 $\pi$ & $\nu$ & workers & votes\\
 \hline
 \$0.20&\$0.04& 60 & 69 \\
 
 \$0.12&\$0.05& 117 & 147 \\
 
 \$0.08&\$0.08& 86 & 90 \\
 
 \$0.05&\$0.12& 36 & 36 \\
 
 \$0.04&\$0.20& 76 & 124\\
 \hline
 &total& 375 & 464
 
\end{tabular}
\caption{The number of votes registered for each round in each payoff structure compared to the number of workers who choose to vote. Overvoting occurs when workers reload a screen and can be accidental or intentional. Lines two five seem to show evidence of cheating. The payment for abstaining is fixed at \$0.02.}
\label{tab:my_label}
\end{table}

One problem we did not expect to encounter to such a degree was overvoting (mainly because we assumed that workers would expect for us to check for this). It is possible for workers who choose to vote to do so more than once to do so simply by hitting the ``back'' button on their browser. It is not clear whether this happens accidentally or intentionally, however, Table~\ref{tab:my_label} shows the total number of over- versus actual- votes in the rounds of each price structure. The amount of overvoting in the second and fifth price structures in particular seems too high to be solely accidental.

\subsection{Theoretical Predictions}
\begin{table}[h]
\centering
\begin{tabular}{rr|rrr}

&&\multicolumn{3}{c}{\# rounds where \# proposals exceeds}\\
$\pi$ & $\nu$ & $\pi/\alpha$ & $\min\{\pi/\nu, 1\}$ & $\nu/\alpha$ \\
\hline
20& 4 & 0 & 5 & 0 \\

12& 5 & 2 & 5 & 5 \\

8& 8 & 0 & 0 & 4 \\

5& 12 & 0 & 0 & 2 \\

4& 20 & 1 & 0 & 4 \\
\hline
&\textbf{total} & 3 & 10 & 15 \\
\hline
&winners & 0 & 4 & 4

\end{tabular}
\caption{How does actual worker behavior compare to the theoretical predictions under the zero knowledge assumption? This table looks at three values that upper bound the number of proposals under those predictions. The values are based on the payoffs for proposing, $\pi$, voting, $\nu$, and abstaining $\alpha$. Note that the maximum number of rounds for each price structure is five.}
\label{tab:pred}
\end{table}
How did the workers in our experiment fare against the theoretical predictions? The theory predicts that whenever the payoffs for proposing and voting are greater than the payoff for abstaining then the first worker will propose and at least one worker will vote. This was consistent with our experiments. 

It also predicts that when the payoff for abstaining is greater than either voting or proposing then no one will vote or propose. We used such a payoff structure in one of our training rounds, and in none of those cases did a worker propose or vote.

Additionally, the theory predicts that there would be no abstentions. Table \ref{tab:basic} shows that, in fact, few workers ever abstained.

Moving beyond these basic results, the theory predicts that, under the catastrophic freeloader, there are two different values that upper bound the number of proposals: namely
$\pi/\alpha$ and $\nu/\alpha$, and $\min\{\pi/\nu, 1\}$, where $\pi$, $\nu$, and $\alpha$ are the payoffs for, respectively, proposing, voting and abstaining. Table \ref{tab:pred} shows the number of rounds where the number of proposals exceeds at least one of these bounds. 

These results, though admittedly of a preliminary nature, might support the hypothesis that $\pi/\alpha$ is an effective upper bound on the proposals.  In terms of our worker's beliefs (and under standard bounded rationality assumptions), this could suggest that, when it comes to proposing new contributions, workers have little confidence in their proposals.

The results also show that the remaining bounds---$\nu/\alpha$, and $\min\{\pi/\nu, 1\}$---having do with how with the expected payoff to future voters and thus about what the current worker thinks of the competence of future workers to vote for his or her contribution, if the current worker proposes, are frequently violated. This would seem to suggest that workers, even though they are not confident about their ideas, are rather more confident about the ability of future voters to make the right choice.

Another takeaway is that the number of times a winning contribution proposed after any one of the three boundaries is violated is very small (eight total). The theory suggests that a rational worker would violate these boundaries only if the worker was confident in having an answer. Our results might suggest that such a worker's confidence is often misplaced.

\begin{figure}
\centering
\includegraphics[width=20pc]{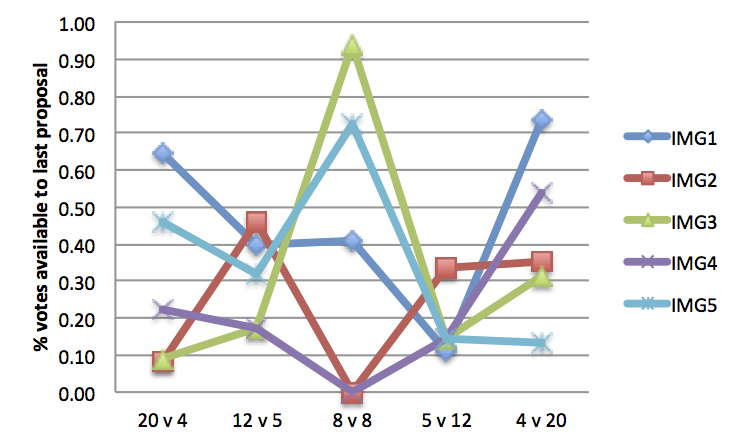}
\caption{Percentage of votes available to the last contribution. In our experiments we observed that---contrary to prediction---workers do not all propose first and then vote after all proposals appear; proposals and votes are interleaved. Thus, proposals that appear earlier are generally eligible for more votes than those that appear later.}
\label{fig:lastprop}
\end{figure}

Regarding monoticity, we looked at the order in which votes and proposals appear. Rather than all proposals appear first, followed by all votes, with no interleaving, votes and proposals were in face interspersed. Consequently, later proposals were eligible for fewer votes. Figure \ref{fig:lastprop} show the percentage of voting workers who did so after the last contribution was proposed. 

\section{Discussion}

Among our incoming assumptions was that the time commitment and financial benefits to microtasking are so small that workers do not make decisions about the amount of effort to put into a task. The possible existence of cheating among voters suggests that worker effort, even at this low-stakes level, must still be considered. We are curious about the motivations of the cheaters. Do they think that cheating gives them a leg up on their competitors, or do they assume that everyone else is doing and so they need to follow suit just to keep up with their peers?

From a theoretical perspective, voting systems in which the alternatives to choose from and votes themselves are added asynchronously, in sequence, presents a number of interesting challenges. For instance, in general there is a bias towards the alternatives that appear earlier, since they have a long time frame within which to gather votes. There are several ways to mitigate this bias, e.g., by fixing the interval of time between when a contribution is proposed and when it can last be voted on, so that all contributions have the same amount of time to garner votes. One can also use importance sampling or impose halting conditions that guarantee each proposal gets a fair chance of being selected. How to best do so depends on the application at hand and is a matter of further study.

\subsection{Limitations}
There are a number of limitations to our approach. The first is that game theory generally imposes a number of simplifying assumptions on the behavior of individuals. This is usually necessary to make the theory tractable, but often leads to behavior that is not observed in practice. Our view hear is that it can nonetheless go a long way in explaining observed behavior if it is regarded as presenting an ideal case (or null hypothesis) and we reflect on \emph{how} (or if) observed behavior deviates from it.

Another limitation is that we do not consider the possibility that workers could collude, say, to propose an answer and then all vote for it. Clearly, the incentive mechanism does not account for this. Voting is well known to be vulnerable to manipulation under ideal conditions, however, how vulnerable they are in practice is a matter of ongoing research. Another way around this problem is to start with a baseline assumption about how large such a manipulating coalition can be and then make the worker pool large enough to negate any effect the manipulators might have. Finally, one can user worker feedback systems to punish those who manipulate through collusion.

Another limitation is that we are taking the ``best'' answer to be that chosen by the majority of the workers. Though this can be effective in certain circumstances---such as putting general labels on images---but not as well when expert advice is needed. For instance, if a worker happened to know know that IMG2 was a by the artist Robert Frank, would he or she propose this information? Or would he or she hold back and propose something simpler (or not propose at all), expecting this fact to be lost on the average worker? 

While we might expect the latter behavior to dominate, we at least once observed the former behavior. In one round involving IMG5, the proposal that received the most votes was a later one that read, ``Metal gate blocking a stone staircase 
that leads up to an ancient aqueduct.'' Perhaps the specificity of the answer conveyed enough authority to convince voters to support it. (Incidently, the image is of the Colosseum of Rome, so the proposal is not too far off the mark.) 

The rest of the proposals from this round are illuminating to read, as they convey a variety of strategies in proposing answers. They read, in order of appearance, ``gate'', ``Protect the place'', ``Staircase'', ``Fenced steps'', (the winning proposal), ``steps to prison'', ``This is a metal grate installed over the staircase to one of Rome's deteriorating  ruins'' (this proposal is slightly less specific than the winning one, and also less erroneous), ``Ruins are ruins, why want to peep in? 
Hence the grill-grate!!'' (which has a less authoritative tone than the winning answer, and almost seems to be chiding some of the other answers for their specificity), ``METAL GATE.'' Perhaps the difference between the winning and losing proposals here is that the level of detail (even if false) and language used in the winning post suggests a higher level of authority in the image than the losing proposals.

How to account theoretically for more realistic worker beliefs is a subject of future research, but we sketch some ideas here (we also explore this question experimentally in the next section). We can for instance assume a belief model where the worker at time $t$ has beliefs above the relative likelihood of each proposal succeeding so far. We can represent those beliefs as real numbers $w_{m_0}, \ldots, w_{m_t} \in [0,1]$, where $t_m$ is the number of proposals at time t $m_t$. Let $w' \in [0,1]$ represent the worker's confidence in his or her own choice. Finally let $P_t$ be a distribution over $[0,1]$ from which future worker's submissions are drawn. The likelihood that the if the worker votes he or she chooses the contribution proposed at time $t' < t$ is $m_t'/\sum_{0 \leq i \leq w_{m_t}} w_i$. Under the zero knowledge assumption, each of these weights is identical. 

\section{Conclusion}
In this paper we discussed and formalized a model for eliciting varying levels of diversity in contributions from crowd workers. While the individual components of this contribution are built on common building blocks. The tradeoff in response set size that we have demonstrated has not been demonstrated before to our knowledge. Our results suggest future work in exploring why and in what circumstances workers will self-select out of a task, and provides a basis for exploring mechanism-based self-filtering of responses by workers.

\newpage

\bibliographystyle{alpha}
\bibliography{tuning}

\end{document}